\newtheorem{lem}{Lemma}
\newtheorem{thm}{Theorem}
\newtheorem{defn}{Definition}
\title{\textbf{Maximum Matchings via Glauber Dynamics}}
\author{Anant Jindal \thanks{Laxmi Niwas Mittal Institute of Information Technology, India. {\tt anantjindal1@gmail.com}} 
\and Gazal Kochar \thanks{Laxmi Niwas Mittal Institute of Information Technology, India.  {\tt gkochar@gmail.com}} 
\and Manjish Pal \thanks{Indian Institute of Technology Gandhinagar, India. {\tt manjish\_pal@iitgn.ac.in}} 
}
\begin{document}

\date{}

\maketitle

\abstract{ 
In this paper we study the classic problem of computing a maximum cardinality matching in general graphs $G = (V, E)$.
This problem has been studied extensively more than four decades. The best known algorithm for this
problem till date runs in $O(m \sqrt{n})$ time due to Micali and Vazirani \cite{MV80}. Even for general bipartite graphs
this is the best known running time (the algorithm of Karp and Hopcroft \cite{HK73} also achieves this bound).  
For regular bipartite graphs one can achieve an $O(m)$ time algorithm which, following a series of papers, has been recently improved to $O(n \log n)$ by 
Goel, Kapralov and Khanna (STOC 2010) \cite{GKK10}. In this paper we present a randomized algorithm based on the Markov Chain
Monte Carlo paradigm which runs in $O(m \log^2 n)$ time, thereby obtaining a significant improvement over \cite{MV80}.\\

We use a Markov chain similar to the \emph{hard-core model} for Glauber Dynamics with \emph{fugacity} parameter $\lambda$, which is used
to sample independent sets in a graph from the Gibbs Distribution \cite{V99}, to design a faster algorithm for finding
maximum matchings in general graphs. Motivated by results which show
that in the hard-core model one can prove fast mixing times (for e.g. it is known that 
for $\lambda$ less than a critical threshold the mixing time of the hard-core model is $O(n \log n)$ \cite{MSW04}, 
we define an analogous Markov chain (depending upon a parameter $\lambda$) 
on the space of all possible partial matchings of a given graph $G$, for which the probability 
of a particular matching $M$ in the stationary follows the Gibbs distribution
which is:
\[
 \displaystyle \pi(M) = \frac{\lambda^{|M|}}{ \sum_{x \in  \Omega} \lambda^{|x|} }   
 \]
where $\Omega$ is the set of all possible matchings in $G$. \\ 

We prove upper and lower bounds on the mixing time of this Markov chain. Although our Markov chain
is essentially a simple modification of the one used for sampling independent sets from the Gibbs distribution,
their properties are quite different. Our result
crucially relies on the fact that the mixing time of our Markov Chain is independent of $\lambda$,
a significant deviation from the recent series of works \cite{GGSVY11,MWW09, RSVVY10, S10, W06} which achieve computational
transition (for estimating the partition function) on a threshold value of $\lambda$. 
As a result we are able to design a randomized algorithm which runs in $O(m\log^2 n)$ time that provides a
major improvement over the running time of the algorithm due to Micali and Vazirani. Using
the conductance bound, we also prove that mixing takes $\Omega(\frac{m}{k})$ time where $k$ is
the size of the maximum matching.  

}

\newpage

\section{Introduction}
Given an unweighted undirected graph $G = (V,E)$ with $|E| = m$ and $|V| = n$, a matching $M$ is 
a set of edges belonging to $E$ such that no two edges in $M$ are incident on a vertex.  If there is a
matching of size $n/2$ (for $n$ even), then it is called a \emph{perfect matching}. The Maximum Matching problem
is to find the maximum sized matching in a given graph. The computational complexity of this problem has
been studied extensively for more than four decades starting with an algorithm of Edmonds.

\subsection{General Graphs}
Edmonds's celebrated paper `Paths, Trees and Flowers' \cite{E65} was the first to give an efficient algorithm (also
called the \emph{blossom shrinking algorithm}) for finding maximum matching
in general graphs. This algorithm can be implemented in $O(n^4)$ time. The running time was
subsequently improved in a number of papers \cite{G73, KM74, L76}. All these papers were variants of Edmonds algorithm.
Even and Kariv \cite{GK82} obtained an improvement to $O(n^{2.5})$ which
was improved by Micali and Vazirani \cite{MV80} who gave an $O(m \sqrt{n})$ time algorithm for the problem by a careful handling 
of blossoms. This is the best known algorithm for finding maximum matchings in general bipartite graphs. 

\subsection{Bipartite Graphs}
For bipartite graphs, the problem can easily be solved using 
the max-flow algorithm by Ford and Fulkerson, an algorithm usually taught in an undergraduate algorithms course \cite{KT09},
which has a running time of $O(mn)$. The first algorithm for this problem was given by Konig \cite{K16}. 
Hopcroft and Karp \cite{HK73} gave an algorithm that runs in $O(m \sqrt{n})$ time. This
algorithm is an exact and deterministic algorithm. The problem becomes significantly simpler for regular bipartite graphs. 
In a \emph{d-regular} bipartite graph every vertex has degree $d$. When $d$ is a power of 2, Gabor and Kariv were able to achieve an $O(m)$
algorithm. After significant efforts, the ideas used there were used by Cole, Ost and Schirra \cite{COS01} to
obtain a get an $O(m)$ algorithm for general $d$. \\      		      

In a recent line of attack by Goel, Kapralov and Khanna \cite{GKK09, GKK09'}, the authors were able to use sampling 
based methods to get improved running time. In the most recent paper they were able to achieve a 
running time of $O(n \log n)$ for $d$-regular graphs \cite{GKK10}. Their algorithm performs an 
appropriately truncated random-walk on a modified graph to successively find augmenting path.

\section{Our Results}
In this paper we give a Markov Chain Monte Carlo algorithm for finding a maximum matching 
in general bipartite graphs. Our algorithm is in the spirit similar to \cite{GKK10} which 
also is a `truncated random walk' based algorithm, however the stationary distribution 
of the underlying Markov Chains in their case is different from ours. 
Inspired from the hard-core model with fugacity parameter $\lambda$ 
of sampling independent sets from graphs we define a similar Markov Chain over the  space of 
all possible partial matching such that its stationary distribution $\pi(\cdot)$ is the Gibbs distribution, ie. given
a matching $M$ its probability $\pi$ is
\[
 \displaystyle \pi(M) = \frac{\lambda^{|M|}}{ \sum_{\sigma \in  \Omega} \lambda^{|\sigma|} }   
 \]
where $\Omega$ is the set of all possible matchings in $G$. Notice that $\pi(M)$ is maximum for
maximum matchings and if $\lambda$ is a significantly large number, $\pi(M)$ tends to 1 for
maximum matchings. 
Our algorithm is extremely simple (being a standard in the MCMC paradigm). Starting from a fixed
matching we start a random walk in $\Omega$ according to the underlying graph $\tilde {G}$ of the Markov
chain. After $T_{mix}$ (mixing time) steps the distribution reached by the algorithm is roughly the same  as the
Gibbs distribution. More formally, the variation distance of ${\cal D}^{t}$ (the distribution after 
$t$ steps) from the Gibbs distribution is less than $\frac{1}{2e}$. This just leaves the task
of proving an upper-bound on the mixing time of the Markov Chain, for which we resort to the 
Coupling Method introduced by Bubley and Dyer \cite{BD97}. We define a metric $\Phi(\cdot, \cdot)$ and
apply the bound from \cite{BD97}, to show that the Markov Chain mixes in $O(m \log n)$ time.    
We also use the conductance method to show that the mixing time will be at least $\Omega(\frac{m}{k})$
where $k$ is the size of the maximum matching. Thus upto logarithmic factors the bounds are same 
when $k$ is small (at most polylogarithmic). 
The main result of our paper can be concisely written as follows:

\begin{thm}[\textbf{Main}]
There exists a randomized algorithm which given a graph $G = (V,E)$ with $|V| = n$ and $|E| = m$ 
finds a maximum matching in $O(m \log^2 n)$ time with high probability.
\end{thm}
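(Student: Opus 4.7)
The plan is to realize the MCMC recipe promised in the introduction. Start by defining a reversible Markov chain on $\Omega$, the set of all matchings of $G$, whose stationary distribution is the Gibbs measure $\pi(M) \propto \lambda^{|M|}$. Mirroring the hard-core Glauber dynamics on independent sets, each step picks an edge $e \in E$ uniformly at random and, depending on how $e$ interacts with the current matching $M$, either adds $e$ (if unblocked), removes $e$ (if $e \in M$), or swaps $e$ with the unique incident matching edge $e'$ when exactly one such $e'$ exists. Setting the acceptance probabilities in Metropolis style so that detailed balance with respect to $\pi$ holds makes $\pi$ the unique stationary distribution of this aperiodic, irreducible chain.

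Next, I would argue that for $\lambda$ large enough relative to $n$, the measure $\pi$ concentrates on maximum matchings. Writing $k$ for the matching number and $N_j$ for the number of matchings of size $j$, a crude calculation gives $\pi(|M| < k) \leq \sum_{j < k} N_j \lambda^{j-k}$, which can be driven below $1/(2e)$ by choosing $\lambda$ polynomially large in $n$ (together with amplification) to overcome the combinatorial count $N_j$. Combined with the $1/(2e)$ variation-distance bound promised after $T_{\text{mix}}$ steps, a triangle inequality shows that a single run lands in a maximum matching with probability at least $1 - 1/e$.

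The core technical step, and the place I expect the main difficulty, is to establish $T_{\text{mix}} = O(m \log n)$. I would invoke the Bubley--Dyer path coupling method: define a metric $\Phi$ on $\Omega$ under which matchings related by a single elementary move (add, remove, or swap) lie at $\Phi$-distance $1$, and couple two copies $(M_t, M_t')$ by drawing the same random edge $e$ at every step. The target contraction is $\mathbb{E}[\Phi(M_{t+1}, M_{t+1}') \mid M_t, M_t'] \leq (1 - c/m)\,\Phi(M_t, M_t')$ for an absolute constant $c > 0$; since the $\Phi$-diameter of $\Omega$ is $O(n)$, Bubley--Dyer then yields mixing in $O(m \log n)$ steps. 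The delicate case is the swap move, where a swap legal from $M_t$ can be blocked from $M_t'$ and momentarily increase $\Phi$; one must verify that the expected decrease coming from the ``aligned'' add/remove moves dominates this worst case, and crucially that the resulting contraction holds \emph{uniformly in $\lambda$}, since this $\lambda$-independence is highlighted as the novel feature of the analysis.

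Finally, I assemble the algorithm. Starting from the empty matching, run the chain for $T_{\text{mix}} = O(m \log n)$ steps, and repeat the whole procedure $O(\log n)$ independent times, returning the largest matching seen. Each step of the chain costs $O(1)$ amortized time with standard data structures for uniform edge sampling and conflict lookup, so the total running time is $O(m \log n) \cdot O(\log n) = O(m \log^2 n)$. Since each of the $O(\log n)$ independent runs produces a maximum matching with constant probability, the overall failure probability decays to $1/\mathrm{poly}(n)$, yielding the claimed high-probability $O(m \log^2 n)$ algorithm.
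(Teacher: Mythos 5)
Your proposal retraces the paper's own route (a Glauber-type chain on matchings with stationary measure proportional to $\lambda^{|M|}$, Bubley--Dyer path coupling, a contraction factor $1-c/m$ claimed to hold uniformly in $\lambda$, and $O(\log n)$ independent restarts), and it inherits the paper's fatal gap at exactly the step you defer: the estimate $E[\Phi(M_{t+1},M_{t+1}')]\le(1-c/m)\,\Phi(M_t,M_t')$ independent of $\lambda$. This is precisely the step the paper's ``Important remark'' retracts --- the coupling used there does not have the correct marginals, and with it the $O(m\log n)$ mixing bound and the Main Theorem are withdrawn --- and your sketch does not repair it: you state the target inequality and note that ``one must verify'' the swap case, but no verification is possible, for your add/remove/swap chain any more than for the paper's add/remove chain. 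Concretely, take $G=C_{2n}$, the even cycle, and start at one of its two perfect matchings $M_1$. Every non-matching edge has both endpoints covered, so no addition is legal and no swap is available (a swap needs exactly one incident matching edge); the only way to leave $M_1$ is to accept a deletion, which happens with probability $\Theta(1/\lambda)$ per step. Hence after $t=O(m\log n)$ steps the chain is still at $M_1$ with probability $1-O(t/\lambda)$, while the Gibbs measure gives $M_1$ mass at most $1/2$ (the other perfect matching has equal weight), so for, say, $\lambda\ge m^2$ the variation distance after $O(m\log n)$ steps still exceeds $1/(2e)$ and in fact $T_{mix}=\Omega(\lambda)$. Since a coupling with your claimed contraction and $\Phi$-diameter $O(n)$ would force $T_{mix}=O(m\log n)$ for every $\lambda$ via the Bubley--Dyer theorem, no such coupling exists: the $\lambda$-independence you (and the paper) single out as the crux is exactly what is false.

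The concentration step is also quantitatively wrong in a way you cannot amplify away. The counts $N_j$ can be $2^{\Omega(n)}$, and there are graphs (the classical obstruction in the Jerrum--Sinclair permanent literature) where the ratio of near-maximum to maximum matchings is exponential in $n$; for such graphs a polynomially large $\lambda$ leaves the Gibbs mass on maximum matchings exponentially small, so a single run succeeds with exponentially small probability and $O(\log n)$ repetitions do not help. One is therefore forced to take $\lambda$ exponential in $m$ (the paper uses $\lambda=2^m$), i.e.\ into exactly the regime where, by the example above, the dynamics is torpid (and where exponential worst-case lower bounds for reaching a maximum matching by such annealing-type add/delete dynamics are known, due to Sasaki and Hajek). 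The net assessment: your argument is essentially the paper's retracted one, its key lemma is false rather than merely unproven, and the Main Theorem should not be regarded as established by this approach.
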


\textbf{Important remark: It has been pointed to us independently by Yuval Peres, Jonah Sherman, Piyush 
Srivastava and other anonymous reviewers that the coupling used in this paper doesn't have 
the right marginals because of which the mixing time bound doesn't hold, and also the main
result presented in the paper. We thank them for reading the paper with interest and promptly pointing
out this mistake. }

\subsection{Organization}
The paper is organized as follows: in Section \ref{a} we give a brief description of the basic
idea and technique that are underlying our algorithm. Subsequently in Section \ref{Pre} we give
an overview of the MCMC paradigm which includes basic preliminaries and definitions regarding
Markov Chain and Mixing. Section 7 is devoted to the details of the chain being used by us and 
proving that indeed it has the desired properties. We then prove
upper and lower bounds on its mixing time using the coupling method and conductance argument
in Section 8 and Section 9 respectively. We end the paper with a conclusion and some open problems.

\section{The Idea: Maximum Matchings via Glauber Dynamics} \label{a}

Our ideas are inspired mainly from the results in the Glauber
dynamics of the hard-core model of fugacity parameter $\lambda$, for
sampling independent sets from the Gibbs distribution. According to the 
Gibbs distribution the probability of an independent set $I$ is given by 
\[
 \displaystyle {\cal G}(I) = \frac{\lambda^{|I|}}{ \sum_{\rho \in  \Omega} \lambda^{|\rho|} }   
 \]
where $\Omega$ is the set of all possible independent sets in $G$ and 
${\cal Z} = \sum_{\rho \in  \Omega} \lambda^{|\rho|}$ is also called the \emph{partition function}.
Clearly for $\lambda = 1$ the partition function value is same as the number of independent sets in 
the graph (computing which is a \#P-Hard problem). \\
In the hard-core model given a particular configuration $\sigma$ of an independent set ($\sigma$ can be
thought of an $n$-dimensional 0/1 vector which is 1 for all the vertices which are in the
independent set and 0 otherwise), we choose a vertex randomly uniformly, if this vertex
is already present in the independent set we keep it with probability $\frac{\lambda}{1 + \lambda}$
and discard it with probability  $\frac{1}{1 + \lambda}$ otherwise the vertex is not in the independent
set and if this vertex can be added to the independent set (i.e. none of its neighbors are already present
in the independent set) then again it is added with with probability $\frac{\lambda}{1 + \lambda}$ and 
rejected with probability  $\frac{1}{1 + \lambda}$. The beauty of this Markov Chain is that the 
stationary distribution is the Gibbs distribution. The details of this chain can be found in \cite{V99}. 

The key difference between the mixing time of Glauber Dynamics for independent
sets and our case is that one can achieve fast mixing time in the former case only 
for small values of $\lambda$. In fact intuitively one should not be able to obtain
fast mixing times for large values of $\lambda$ because such a result would imply that
we can design a randomized polynomial time algorithm for finding maximum independent set
in a graph, which is an NP-Hard problem.  This intuition has led to a series of papers \cite{GGSVY11,MWW09, RSVVY10, S10, W06}
which ultimately has been successful in proving that there exists a threshold value
of  $\lambda = \lambda_c$ such that if $\lambda > \lambda_c$ then estimating the
partition function is hard (the exact technical condition is that unless $NP = RP$ no
FPRAS exists for estimating ${\cal Z}$) and for $\lambda < \lambda_c $ one can obtain an FPTAS for the 
same problem. Previous to this result, the computational complexity of estimating 
the partition function (and counting the number of independent sets)
was only understood for special graphs \cite{DFJ01, V01}.  Apart from these results
substantial attention has been given to obtain good bounds on the mixing time of
this chain for trees \cite{RSVVY10}. 

We define a Markov Chain which is tuned to our need. In our case $\sigma$ is a set 
of edges which form a partial matching. We make a simple modification 
to the above chain, wherein instead of picking a random vertex we pick a random 
edge $e_r \in E$ and perform the same experiment with the parameter $\lambda$ 
as in the case of independent sets (notice that the chosen edge won't be added if in the present matching there is an edge
sharing an end point with $e_r$). We then show that this chain is aperiodic and
irreducible with stationary as the Gibbs distribution over the space of all 
possible partial matchings with parameter $\lambda$. We then use the techniques of bounding
the mixing time to achieve a $\lambda$ independent upper bound. This remarkable
property allows us to exploit the nature of Gibbs distribution (which we obtain
for very large values of $\lambda$) without getting an overhead on the mixing time.

\subsection{Markov Chain Monte Carlo} \label{b}
Markov Chain Monte Carlo algorithms have played a significant role in statistics, econometrics, physics and computing science over the last two decades. 
For some high-dimensional problems in geometry, such as computing the volume
of a convex body in $d$ dimensions, MCMC simulation is the only known general
approach for providing a solution within time polynomial in $d$ \cite{DFK91}. For a number of other hard problems like
approximating the permanent \cite{JS89}, approximate counting \cite{JS96},
the only known FPRASs (\emph{Fully Polynomial time Randomized Approximation Schemes})
rely on the MCMC paradigm. In this paper, we use this method to obtain a faster 
algorithm for the classical problem of finding maximum matchings in general graphs, a problem
which is known to be solvable in polynomial time. \\
The Markov Chain Monte Carlo (MCMC) method is a simple and frequently used
approach for sampling from the Gibbs distribution of a statistical mechanical system. 
The idea goes like this, we design a Markov chain whose state space is $\Omega$
whose stationary distribution is the desired Gibbs distribution.
Starting at an arbitrary state, we simulate the Markov chain on $\Omega$ 
until it is sufficiently close to its stationary distribution. We then output the final state which is a sample from (close to) the desired distribution. The
required length of the simulation, in order to get close to the stationary distribution, is
traditionally referred to as the mixing time $\tau$ or $T_{mix}$ and the aim is to bound
the mixing time to ensure that the simulation is efficient. For a detailed 
understanding of the theory of Markov Chains we would recommend the recent excellent book 
by Levin, Peres and Wilmer \cite{LPW}.    

\section{Preliminaries} \label{Pre}
\subsection{Markov chains}
Consider a stochastic process $(X _t)_{t=0}^\infty$ on a finite state space $\Omega$. Let P denote a
non-negative matrix of size $|\Omega| \times |\Omega|$ which satisfies 
\[
\sum_{j\epsilon\Omega}P_{ij}=1 \mbox {     } \forall  i \in \Omega
\] 

The process is called a Markov chain if for all times $t$ and $i,j \in \Omega$
probability of going from $i^{th}$ state to $j^{th}$ state is independent of the path by which $i^{th}$ state is reached i.e. 
if $X_t$ is the state of the process at time $t$ then
\[
\mbox{P}[X_{t+1} | X_t = x_t, X_{t-1} = x_{t-1} \dots X_0 = x_0] = \mbox{P}[X_{t+1} | X_t = x_t]
\]

A distribution $\pi$ is called a \emph{stationary distribution} if it satisfies
$\pi P = \pi$. A necessary and sufficient condition for a chain to have a unique stationary
distribution is that the chain is

\begin{itemize}

\item[1.] \emph{Irreducible}: for all $i,j \in \Omega$ there exists a time $t$ such that $P^t_{ij} >$ 0; and
\item[2.] \emph{Aperiodic}: for all $i \in \Omega$, GCD \{$t : P^t_{ii} >$ 0\} = 1.

\end{itemize}

A Markov Chain which has both of the above properties is called \emph{ergodic}. 
For an ergodic Markov chain, if a distribution $\pi$ satisfies the detailed balance equations
\[
 \pi_iP_{ij} = \pi_jP_{ji} 
\] 
for all $i,j \in \Omega$ then 
$\pi$ is the (unique) stationary distribution and such a chain is called \emph{reversible}.

\subsection{Mixing Time} \label{c}
The notion of mixing time is defined as a way to measure the closeness of the distribution after
$t$ steps w.r.t. the stationary distribution. The total variation distance between two discrete probability
distributions over a finite space $\Omega$ is defined as the half of the $l_1$ norm of the corresponding 
probability vectors.
\[
d_{TV}(\mu, \nu) = \frac{1}{2} \sum_{\omega \in \Omega} |\mu(\omega) - \nu(\omega)|
\]
If $P_t$ is the probability distribution after $t$ steps in the random walk then $T_{mix}$ is the
minimum $t$ for which,
\[
d_{TV}(P_t, \pi) \leq \frac{1}{2e}
\]
where $\pi$ is the stationary distribution. Therefore, if we intend to get close to a stationary distribution
we just have to truncate the random walk on the state space after $\tau$ steps.

\subsection{Conductance} \label{d}
The conductance of a Markov chain is defined as the following quantity,
\[
\phi(G) = \displaystyle \min_{S \subset V} \frac{\sum_{i \in S, j \in \bar{S}} \pi_i p_{ij}}{(\sum_{i \in S} \pi_i) (\sum_{i \in \bar{S}} \pi_i)}
\]
Another quantity of our interest here is $T_{relax}$, the \emph{relaxation time} of the Markov
chain. $T_{relax}$ is defined as the inverse of conductance i.e. $T_{relax} = \frac{1}{\phi(G)}$. It is 
known that $T_{mix}$ and $T_{relax}$ obey the following inequality \cite{B09}.
\[
T_{relax} + 1\leq T_{mix}.
\]
Also it is known \cite{RSVVY10} that
 \[
 \displaystyle T_{relax} = \Omega \left(\frac{1}{\phi}\right), 
 \]
 a bound usually used to prove lower bounds on the Mixing time of Markov Chains. 

\section{The Chain} \label{e}
In this section, we describe the chain considered by our algorithm which is 
essentially a modification of the hard-core model of Glauber Dynamics. Recall that
our objective is to come up with a chain whose stationary distribution ensures
that the probability of being at a maximum matching is the largest. Recall that first
we need to ensure that our chain is aperiodic and irreducible.

We use the following natural modification of the Markov chain for the hard-core 
model of Glauber dynamics.

\begin{itemize}
\item  Choose an edge $e_r$ uniformly at random from $E$.
\item Let 
\begin{center}
$\sigma' = \begin{cases} \sigma\bigcup\{e_r\}, & \text{with probability} \frac{\lambda}{1+\lambda} \\ \sigma\backslash\{e_r\}, &\text{with probability} \frac{1}{1+\lambda} \end{cases}$
\end{center}
\item If $\sigma'$ is a valid matching, move to state $\sigma'$ otherwise remain at state $\sigma$.
\end{itemize} 

We are now prepared to prove that this is a valid Markov Chain with stationary as the Gibbs distribution, where 
we define Gibbs distribution as the following distribution over the set of all possible matchings $\Omega$ as
\[
 \displaystyle {\cal G}(M) = \frac{\lambda^{|M|}}{ \sum_{x \in  \Omega} \lambda^{|x|} }   
 \]
In the rest, we will call $Z = \sum_{x \in  \Omega} \lambda^{|x|}$.

\begin{lem}
$X_t$ is ergodic with stationary distribution as $\cal G$. 
\end{lem}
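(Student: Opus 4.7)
The plan is to verify the two requirements for ergodicity (irreducibility and aperiodicity) and then identify $\mathcal{G}$ as the stationary distribution by checking the detailed balance equations, which is the cleanest route since the chain is naturally reversible.

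First I would establish irreducibility by showing that the empty matching $\emptyset$ is reachable from every state and reaches every state with positive probability in a bounded number of steps. From any matching $\sigma$, repeatedly picking an edge $e \in \sigma$ (each chosen with probability $1/m$) and deleting it (probability $1/(1+\lambda)$) yields a valid transition, so $\emptyset$ is reachable in at most $|\sigma|$ steps with positive probability. Conversely, any matching $M$ can be built from $\emptyset$ by inserting its edges one by one: at every intermediate stage the set remains a matching, so each insertion step has positive probability. Composing these two transitions shows $P^t_{\sigma,\tau}>0$ for some $t$ between any pair $\sigma,\tau \in \Omega$. For aperiodicity I would exhibit a self-loop at $\emptyset$: picking any edge $e_r$ and choosing to ``remove'' it with probability $1/(1+\lambda)$ leaves the state unchanged, so $P_{\emptyset,\emptyset}>0$, giving $\mathrm{GCD}\{t : P^t_{\emptyset,\emptyset}>0\}=1$. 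Combined with irreducibility this propagates to all states.

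Next I would verify that $\mathcal{G}$ satisfies detailed balance, from which uniqueness follows. The only pairs $\sigma,\sigma'$ of distinct matchings with nonzero transition probability differ by exactly one edge $e$: say $\sigma' = \sigma \cup \{e\}$ and $|\sigma'|=|\sigma|+1$. For such a pair,
\[
P_{\sigma,\sigma'} = \frac{1}{m}\cdot\frac{\lambda}{1+\lambda}, \qquad P_{\sigma',\sigma} = \frac{1}{m}\cdot\frac{1}{1+\lambda},
\]
since picking $e$ and then the appropriate add/remove branch is the unique way to execute either transition. Therefore
\[
\mathcal{G}(\sigma)\,P_{\sigma,\sigma'} = \frac{\lambda^{|\sigma|}}{Z}\cdot\frac{\lambda}{m(1+\lambda)} = \frac{\lambda^{|\sigma'|}}{Z}\cdot\frac{1}{m(1+\lambda)} = \mathcal{G}(\sigma')\,P_{\sigma',\sigma},
\]
and for matchings differing in more than one edge both sides vanish. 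Thus $\mathcal{G}$ satisfies detailed balance with respect to $P$, so $\mathcal{G}$ is stationary, and by ergodicity it is the unique stationary distribution.

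There is no real obstacle here; the only step that requires a little care is making sure that the ``add'' branch is the \emph{only} way the chain moves from $\sigma$ to $\sigma\cup\{e\}$ (and similarly for ``remove''), so that no extra factors appear in $P_{\sigma,\sigma'}$. Once that bookkeeping is done the $\lambda$-factors telescope exactly, and the whole argument is essentially the matching analogue of the standard proof for the hard-core Glauber dynamics.
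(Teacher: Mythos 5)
Your proposal is correct and follows essentially the same route as the paper's proof: irreducibility via the path that deletes all edges down to the empty matching and then inserts the target matching's edges, aperiodicity via self-loops, and identification of $\cal G$ as stationary by checking detailed balance on pairs of matchings differing in a single edge, with exactly the same transition probabilities $\frac{1}{m}\cdot\frac{\lambda}{1+\lambda}$ and $\frac{1}{m}\cdot\frac{1}{1+\lambda}$. Your write-up is in fact slightly more careful than the paper's (e.g.\ noting that the add/remove branch is the unique way to realize each transition), but the argument is the same.
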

\begin{proof}
Since for every state there is some probability by which the walk can remain in the 
same state, the chain is aperiodic. Also the underlying graph is connected because 
given any matching there is a at least one path to reach any other matching (consider the 
path that first drops all the edges of the initial matching and adds the edges of the new
matching one by one). \\
To show that the stationary of this distribution is $\cal G$, we would show that the chain is reversible w.r.t.
$\cal G$. Consider two distinct states $i$ and $j$ in the Markov Chain. Assume w.l.o.g that there
is an edge from $i$ to $j$ (if there is no edge from $i$ to $j$ then the balance equations corresponding
to $i$ and $j$ are trivially satisfied). Let the configuration $i$ has $t$ edges in it then according to 
our construction, $j$ will have either $t+1$ or $t-1$ states (it won't be $t$ because we have assumed that
the two states are distinct). We will just look at the case when $j$ has $t+1$ edges, the other case is analogous.
Since we are interested in showing reversibility w.r.t. $\cal G$, $\pi_i = \frac{\lambda^t}{Z}$ and $\pi_j = \frac{\lambda^{t+1}}{Z}$. Therefore,
 $\pi_i P_{ij} = \frac{\lambda^t}{Z} \cdot \frac{\lambda}{m (1 + \lambda)}$,
and $\pi_j P_{ji} =  \frac{\lambda^{t+1}}{Z} \cdot \frac{1}{m (1 + \lambda)}$. Thus $\pi_i P_{ij} = \pi_j P_{ji} =  \frac{\lambda^{t+1}}{mZ(1+\lambda)}$

\end{proof}

\section{Upper Bound using Coupling } \label{g}
In this section we prove an upper bound on $T_{mix}$ of the chain defined in the previous section. Our
bound is based on the coupling argument introduced by Bubley and Dyer \cite{BD97}. We first give a description
of the coupling method. 

\subsection{ Coupling Method }
A coupling of a Markov chain on state space 
 is a stochastic process $(\sigma_t,\eta_t)$ on
 $|\Omega| \times |\Omega|$ such that:
 \begin{itemize}
 \item  $\sigma_t$ and $\eta_t$ are copies of the original Markov chain and
 \item if $\sigma_t = \eta_t$, then $\sigma_{t+1} = \eta_{t+1}$
\end{itemize}
Thus the chains follow each other after the first instant when
they hit each other. In order to measure the distance between the two
copies of the chain, one introduces a distance function $\Phi$ on the product state space $\Omega \times \Omega$ so that
$\Phi= \Phi(\sigma_t,\eta_t) = 0 \Longleftrightarrow \sigma_t = \eta_t$. For two states $\sigma$ and $\eta$ let
$\rho(\sigma,\eta)$ be the set of all paths from $\sigma$ to $\eta$ in the Markov Chain. The following theorem due to 
to Bubley and Dyer is used to prove mixing time on Markov chain. 
\newline
\begin{thm}\label{bd97} 
Let $\Phi$ be an integer-valued metric defined on  $\Omega \times \Omega$ which takes values in \{0,1...D\} such that, for all $\sigma,\eta \in \Omega$ there exists a path $\xi\in\rho(\sigma,\eta)$ with 
\begin{eqnarray*}\Phi(\sigma,\eta)= \sum_{i}\Phi(\xi^i,\xi^{i+1})
\end{eqnarray*}
Suppose there exists a constant $\beta<$1 and a coupling $(\sigma_t,\eta_t)$ of the Markov chain such that, for all $\sigma_t, \eta_t$,
\begin{eqnarray*}E[\Phi(\sigma_{t+1},\eta_{t+1})]\le\beta\Phi(\sigma_t,\eta_t)
\end{eqnarray*}
Then the mixing time is bounded by\begin{eqnarray*}\tau\le \frac{\log(2eD)}{1-\beta}\end{eqnarray*}
\begin{proof}
Can be found in \cite{BD97,V99}.
\end{proof}
\end{thm}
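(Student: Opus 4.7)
The plan is to combine three standard ingredients. First I iterate the one-step contraction to get exponential decay of $E[\Phi(\sigma_t,\eta_t)]$; then I convert this expected-distance bound into a bound on the coupling collision probability $\Pr[\sigma_t\ne\eta_t]$ via Markov's inequality, using that $\Phi$ is integer-valued and vanishes exactly on the diagonal; finally I apply the coupling lemma to upgrade the collision probability into a bound on the total variation distance to stationarity.

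Concretely, conditioning on $(\sigma_t,\eta_t)$ and applying the hypothesis gives $E[\Phi(\sigma_{t+1},\eta_{t+1})\mid\sigma_0,\eta_0]\le\beta\,E[\Phi(\sigma_t,\eta_t)\mid\sigma_0,\eta_0]$ by the tower property, so by induction $E[\Phi(\sigma_t,\eta_t)\mid\sigma_0,\eta_0]\le\beta^t\Phi(\sigma_0,\eta_0)\le\beta^tD$ for every starting pair, using $\Phi\in\{0,1,\ldots,D\}$. Since $\Phi$ is integer-valued with $\Phi(\sigma,\eta)=0$ iff $\sigma=\eta$, we have $\mathbf{1}[\sigma_t\ne\eta_t]\le\Phi(\sigma_t,\eta_t)$, and taking expectations yields $\Pr[\sigma_t\ne\eta_t]\le\beta^tD$.

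By the standard coupling lemma, which gives $|P^t(\sigma_0,A)-P^t(\eta_0,A)|\le\Pr[\sigma_t\ne\eta_t]$ for every $A\subseteq\Omega$, this translates into $d_{TV}(P^t_{\sigma_0},P^t_{\eta_0})\le\beta^tD$; taking $\eta_0$ distributed according to the stationary $\pi$ and pushing the expectation through yields $d_{TV}(P^t_{\sigma_0},\pi)\le\beta^tD$ from every starting state. To drive this below $1/(2e)$, I would invoke the elementary inequality $\log(1/\beta)\ge 1-\beta$ valid for $\beta\in(0,1)$, so that $\beta^t\le e^{-t(1-\beta)}$. Requiring $e^{-t(1-\beta)}D\le 1/(2e)$ and solving for $t$ gives exactly $\tau\le\log(2eD)/(1-\beta)$.

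There is not really a genuine obstacle here, since every step is a textbook move; the main point of care is that the one-step contraction must hold for every pair $(\sigma_t,\eta_t)$ that can arise during the coupling (not merely for the initial pair), so that the induction is valid. The path-additivity hypothesis on $\Phi$ is not logically required for the implication I sketched above, but it is what makes the theorem useful in practice: in the original Bubley--Dyer path-coupling technique one only verifies the contraction on \emph{adjacent} pairs of states and then extends it to arbitrary $\sigma,\eta$ by summing the decomposition $\Phi(\sigma,\eta)=\sum_i\Phi(\xi^i,\xi^{i+1})$ along a shortest path, which is precisely where that hypothesis is invoked.
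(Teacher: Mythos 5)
Your argument is correct: iterating the contraction, bounding $\Pr[\sigma_t\ne\eta_t]$ by $E[\Phi(\sigma_t,\eta_t)]\le\beta^tD$ via Markov's inequality (using that $\Phi$ is an integer-valued metric), applying the coupling lemma with $\eta_0\sim\pi$, and using $\beta^t\le e^{-t(1-\beta)}$ is exactly the standard Bubley--Dyer/Vigoda proof. The paper itself offers no proof of this theorem --- it only cites \cite{BD97,V99} --- so there is nothing to contrast beyond noting that your write-up supplies the omitted argument; your closing observation is also accurate, namely that the path-decomposition hypothesis is not needed once contraction is assumed for \emph{all} pairs, and only becomes essential in the usual path-coupling formulation where contraction is verified on adjacent pairs only (which is in fact how the paper later uses the theorem). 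The only cosmetic caveat is that $t$ must be an integer, so the conclusion should strictly read $\tau\le\lceil\log(2eD)/(1-\beta)\rceil$.
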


In order to bound the mixing time, we will  define a coupling so as to
minimize the time until both copies of the Markov chain reach the same state, 
and we will do that by defining the coupling in such a way that on every step both markov chains reach towards same state. 
The aim is to prove a good upper bound on $E[\Phi(\sigma_{t+1},\eta_{t+1})]$ in terms of $\Phi(\sigma_t, \eta_t)$.
In the following subsection we define the coupling:

\subsubsection{Coupling}
Consider the following process $(\sigma_t, \eta_t)$ on $|\Omega| \times |\Omega|$ where $\Omega$ is
the space of all possible matchings.
\begin{defn}
Choose an edge uniformly at random,
\begin{itemize}
\item[1.]  If insertion is possible in both $\sigma$ and $\eta$ then add it probability $\frac{\lambda}{1 + \lambda}$ and remove it with probability
$\frac{1}{1 + \lambda}$.
\item[2.]  If insertion is possible in one and not possible in the other then remove that edge if it
is already present in one matching.
\end{itemize}
\end{defn}
Notice that here we are relying on the fact that one can insert an edge if it is already
present in it. It is easy to verify that this indeed is a coupling for the Markov chain defined in the previous section.

We use the following distance function for the aforementioned coupling. Let
\[
\sigma \oplus \eta =\left\{ e\in E | e \in \left( (\sigma \setminus \eta) \bigcup (\eta \setminus \sigma)\right) \right\} \mbox{ where }\sigma \mbox{ and }\eta \in \Omega
\]
define the distance function $\Phi(\sigma_t, \eta_t) = |\sigma \oplus \eta|$
which is the number of edges present in one but not in the other. Our objective is 
to upper-bound $E[\Phi_(\sigma_{t+1}, \eta_{t+1})]$ in terms of $d_t = \Phi(\sigma_{t}, \eta_{t})$.
Based on the definition of our coupling the following cases may arise once we pick an 
edge $e_r$ uniformly randomly:
\begin{itemize}
\item[1.] \textbf{$e_r$ can be added to both $\sigma_t$ and $\eta_t$}: The subevents are (a) $e_r$
was present in both of them, in this case $\Phi(\sigma_{t+1}, \eta_{t+1}) = d_t$, (b) $e_r$ is not
present in both of them, in which case again $\Phi(\sigma_{t+1}, \eta_{t+1}) = d_t$ and (c) $e_r$
is present in one but not in other, in which case $\Phi(\sigma_{t+1}, \eta_{t+1}) = d_t-1$.

\item[2.] \textbf{$e_r$ can be added in exactly one of $\sigma_t$ and $\eta_t$}: The sub events
for this case are (a) $e_r$ is not present in both, which gives $\Phi(\sigma_{t+1}, \eta_{t+1}) = d_t$
(b)$e_r$ is present in one matching and not in other, in which case $\Phi(\sigma_{t+1}, \eta_{t+1}) = d_t-1$

\item[3.] \textbf{$e_r$ can't be added to any one of $\sigma_t$ and $\eta_t$}: In this case 
$\Phi(\sigma_{t+1}, \eta_{t+1}) = d_t$. 
 
\end{itemize}
Using the above events we can prove the following:

\begin{lem}
$\displaystyle E[\Phi(\sigma_{t+1}, \eta_{t+1})] =\Phi(\sigma_{t}, \eta_{t})\left(1 -\frac{1}{m}\right)$.
\end{lem}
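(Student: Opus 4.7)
The plan is to condition on the uniformly random edge $e_r \in E$ picked by the coupling and show that only edges in $\sigma_t \oplus \eta_t$ contribute to a change in $\Phi$, each by exactly $-1$ in expectation. Concretely, writing
\[
E[\Phi(\sigma_{t+1},\eta_{t+1})] \;=\; \frac{1}{m}\sum_{e \in E} E[\Phi(\sigma_{t+1},\eta_{t+1}) \mid e_r = e],
\]
the goal reduces to proving $E[\Phi(\sigma_{t+1},\eta_{t+1}) - \Phi(\sigma_t,\eta_t) \mid e_r = e] = -1$ for $e \in \sigma_t \oplus \eta_t$ and $0$ otherwise, after which a single summation gives the claimed multiplicative factor $1 - 1/m$.

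First, for edges $e \notin \sigma_t \oplus \eta_t$, I would split on whether $e$ lies in both matchings or in neither. If $e$ is in both, the coupling's ``add'' outcome is a no-op on each copy and its ``remove'' outcome strips $e$ from both, so $e$ is not introduced into the symmetric difference. If $e$ is in neither, I would run through all three coupling branches (insertion possible in both, in exactly one, or in neither) and verify that the outcome is symmetric between $\sigma$ and $\eta$: either both copies gain $e$, or neither does, or the case-2 deterministic ``remove if present'' rule does nothing because $e$ is absent from both. Either way $\Phi$ is unchanged.

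Next, for $e \in \sigma_t \oplus \eta_t$, WLOG $e \in \sigma_t \setminus \eta_t$, the key observation is that because $e \in \sigma_t$, adding $e$ to $\sigma_t$ is a trivial no-op, so insertion is always ``possible'' on the $\sigma$-side and we are in case 1 or case 2 of the coupling. In case 1, the ``add'' branch fills $e$ into $\eta_t$ while leaving $\sigma_t$ alone, making $e$ common to both, and the ``remove'' branch strips $e$ from $\sigma_t$ while leaving $\eta_t$ alone; either way $e$ exits the symmetric difference and $\Phi$ drops by $1$. In case 2 (insertion blocked on $\eta_t$), the deterministic rule removes $e$ from the copy containing it, namely $\sigma_t$, again dropping $\Phi$ by $1$. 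Crucially no new edge is ever added to the symmetric difference in these updates, so no compensating $+1$ contributions appear.

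The main obstacle I expect is the asymmetric case 2 of the coupling, which treats the two chains differently; I would want to check carefully that (i) the ``remove if present'' prescription is unambiguous (it cannot be present on the blocked side while absent on the other without reversing the roles I assumed), and (ii) this deterministic step does not perturb any edge of $\sigma_t \oplus \eta_t$ other than $e_r$ itself, so that the per-edge contributions really do decouple. Once this bookkeeping is verified, summing the conditional expectations over the $m$ equally likely choices of $e_r$ gives $E[\Phi_{t+1}] - \Phi_t = -|\sigma_t \oplus \eta_t|/m = -\Phi(\sigma_t,\eta_t)/m$, which is equivalent to the stated identity.
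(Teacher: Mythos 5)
Your proposal is correct and follows essentially the same route as the paper: the paper likewise observes that $\Phi$ changes only when $e_r\in\sigma_t\oplus\eta_t$, splits those edges into the ones addable to the matching missing them (your case 1, where add/remove probabilities sum to $1$) and the ones that are blocked (your case 2, the deterministic removal), concluding that conditioned on such a choice the distance drops by exactly $1$, giving $\mbox{Pr}[\Phi_{t+1}=d_t-1]=\frac{|\sigma_t\oplus\eta_t|}{m}$ and hence the factor $1-\frac{1}{m}$. Your per-edge conditioning is just a slightly finer bookkeeping of the same computation, and your verification that edges outside the symmetric difference leave $\Phi$ unchanged matches the paper's (implicit) treatment of cases 1(a), 1(b), 2(a) and 3.
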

\begin{proof}
Since $\Phi(\sigma_{t+1}, \eta_{t+1})$ can only take two values either $d$ or $d-1$ we only 
need to calculate the the probability of the happening of one of these cases. This happens
when either the event 1(c) or the event 2(b) takes place (as mentioned above). Thus,
\[
\mbox{\textbf {Pr}}[\Phi(\sigma_{t+1}, \eta_{t+1}) = d_t -1] = \mbox{\textbf {Pr}}[1(c) \cup 2(b)]
\]
clearly the distance becomes $d_t -1$ when the chosen edge is one of the edges in $\sigma \oplus \eta$.
We can divide $\sigma \oplus \eta$ in to two sets $U$ and $\bar{U}$. $U$ is the set of edges 
which can be added to the matching in which it is not present, and $\bar{U}$ is the set of 
edge which can't be added to the matching in which it is not present. Using this notation we can write 
the desired probability as 
\begin{eqnarray*}
\displaystyle \mbox{\textbf {Pr}}[1(c) \cup 2(b)] &=& \sum_{e \in U} \frac{1}{m}\left( \frac{\lambda}{1+\lambda} + \frac{1}{1+\lambda} \right)  +  \sum_{e \in \bar{U}} \frac{1}{m} \\
\displaystyle &=& \frac{ |\sigma \oplus \eta|}{m} =  \frac{\Phi(\sigma_t, \eta_t)}{m}
\end{eqnarray*}

Therefore, 
\begin{eqnarray*}
\displaystyle E[\Phi(\sigma_{t+1}, \eta_{t+1})] &=& (d_t -1)\frac{\Phi(\sigma_t, \eta_t)}{m} + d_t \left(1 -\frac{\Phi(\sigma_t, \eta_t)}{m}\right) \\
\displaystyle &=& \Phi(\sigma_{t}, \eta_{t})\left(1 -\frac{1}{m}\right)  (\mbox{ since } d_t = \Phi(\sigma_{t}, \eta_{t}) )
\end{eqnarray*}

\end{proof}

We can now prove the following:
 
\begin{lem}
$T_{mix} = O(m \log n)$.
\end{lem}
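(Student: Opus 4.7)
The plan is to invoke Theorem \ref{bd97} directly, using the contraction estimate just proved together with a bound on the diameter of the metric $\Phi$. The previous lemma supplies $E[\Phi(\sigma_{t+1},\eta_{t+1})] \le \beta\, \Phi(\sigma_t,\eta_t)$ with $\beta = 1 - \tfrac{1}{m}$, so the one-step contraction hypothesis of Bubley--Dyer is already in hand. All that remains is to check the path-additivity of $\Phi$ and to produce an integer $D$ bounding its maximum value.

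For path-additivity, I would argue as follows: given any two matchings $\sigma,\eta$, the symmetric difference $\sigma \oplus \eta$ can be processed one edge at a time. Starting from $\sigma$, first remove all edges in $\sigma \setminus \eta$ one by one (each such removal is a legal single-step transition of the chain since dropping an edge never violates the matching constraint), and then insert the edges of $\eta \setminus \sigma$ one by one (each insertion is legal because by the time we insert it, the current state is a subset of $\eta$, so no endpoint conflict arises). Along this path $\xi^0 = \sigma, \xi^1, \dots, \xi^\ell = \eta$, consecutive states differ in exactly one edge, so $\Phi(\xi^i, \xi^{i+1}) = 1$ and $\sum_i \Phi(\xi^i, \xi^{i+1}) = \ell = |\sigma \oplus \eta| = \Phi(\sigma,\eta)$, as required. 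For the diameter, since every matching has at most $n/2$ edges, $\Phi(\sigma,\eta) \le |\sigma| + |\eta| \le n$, so one can take $D = n$.

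Plugging $\beta = 1 - 1/m$ and $D = n$ into the Bubley--Dyer bound $\tau \le \tfrac{\log(2eD)}{1-\beta}$ yields $\tau \le m\log(2en) = O(m\log n)$, which is exactly the claim.

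I do not anticipate any serious obstacle in this argument: the contraction and the metric-additivity are both routine once the coupling is set up, and the whole proof is essentially bookkeeping. The only subtle point worth double-checking is that the integer-valued metric $\Phi$ really does decrease (in expectation) by the claimed factor \emph{pointwise} in $(\sigma_t,\eta_t)$ and not just on average over starting pairs, since Bubley--Dyer requires the contraction for every pair of states; the preceding lemma is stated in the required pointwise form, so this is fine.
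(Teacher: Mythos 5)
Your proposal is correct and follows essentially the same route as the paper's own proof: invoke the Bubley--Dyer path-coupling theorem with the contraction factor $\beta = 1-\tfrac{1}{m}$ from the preceding lemma, verify path-additivity of $\Phi$ via the remove-then-insert path through the symmetric difference, and bound the diameter by $O(n)$ (you take $D=n$, the paper takes $D=2n$; either gives $O(m\log n)$). Note only that, as the paper's own remark acknowledges, the coupling feeding into this argument does not have the correct marginals, so the contraction lemma both you and the paper rely on is flawed---but as a deduction from that lemma your argument matches the paper's.
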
 
\begin{proof}
Given any $\sigma$ and $\eta$ we define a $d=\Phi(\sigma,\eta)$ length path as $\sigma=\xi^1,\xi^2 \dots ,\xi^d=\eta$ such that $\xi^{i+1}$ is the state obtained by removing exactly one edge from $\xi^i\in \sigma\bigcap(\sigma\oplus\eta)$ for $i=1,2 \dots j$ where $\xi^j$ consists only of edges which do not belong to $\sigma\oplus\eta$ and for all $k\ge j$, $ \xi^{k+1}$ is obtained by adding one edge to $\xi^k$ which belongs to $\eta \bigcap(\sigma\oplus\eta)$.
Since $\xi^i,\xi^{i+1}=1$ for all $i=1,2\dots d-1$, we have 
 \begin{eqnarray*}\Phi(\sigma,\eta)= \sum_{i}\Phi(\xi^i,\xi^{i+1})\\
\end{eqnarray*}
Also we can write 
\[
E[\Phi(\sigma_{t+1}, \eta_{t+1})] =\beta \Phi(\sigma_{t}, \eta_{t})  (\mbox{ with } \beta = \left(1 -\frac{1}{m}\right)) 
\]
this allows us to apply the result from Theorem \ref{bd97} which gives the following result \\
\begin{eqnarray*}
T_{mix}\le \frac{\log 2eD}{\displaystyle 1- \left(1-\frac{1}{m}\right)}\leq m\log(4en)=O(m \log n)\\
\end{eqnarray*}where we have used $\beta  =(1 - 1/m)$ and $D = 2n$.
\end{proof}

Our algorithm is concisely presented as follows: \\

 \begin{algorithm}[H]\label{A}
\KwIn{A Graph $G=(V,E)$ with $|V| = n$ and $|E| = m$}
$\sigma_0  \leftarrow anymatching$\;
$\lambda = 2^m$\;
\For{$t=0$ \emph{\KwTo} $10m \log n$}{
choose an edge $e_r$ uniformly randomly \;
$\sigma' = \begin{cases} \sigma_t\bigcup\{e_r\}, & \text{with probability} \frac{\lambda}{1+\lambda} \\ \sigma_t\backslash\{e_r\}, &\text{with probability} \frac{1}{1+\lambda} \end{cases}$\;
if $\sigma'$ is a valid matching\;
$\sigma_{t+1}=\sigma'$\;
else\;
$\sigma_{t+1}=\sigma_t$\;
}

\KwRet{$\sigma_t$}
\caption{RandMatching\label{RM}}
\end{algorithm}

\begin{thm}\label{prob}
The probability that the random walk in Algorithm \ref{RM} ends on a maximum matching is at least
$\frac{20}{189}$.
\end{thm}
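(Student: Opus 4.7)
\smallskip
\noindent
The plan is to combine two ingredients: the mixing time bound $T_{mix} \le m\log(4en)$ established in the previous section, and an explicit lower bound on $\pi(\mathcal{M}^\star)$, the stationary Gibbs weight of the set $\mathcal{M}^\star$ of maximum matchings of $G$. The essential enabler, highlighted already in the introduction, is that our coupling bound on $T_{mix}$ does not depend on $\lambda$. This is what allows the algorithm to set $\lambda = 2^m$, forcing $\pi$ to concentrate on $\mathcal{M}^\star$ at no cost in the mixing time.

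To lower bound $\pi(\mathcal{M}^\star)$, let $k$ denote the size of a maximum matching and write $N_j$ for the number of matchings of size $j$ in $G$. The partition function is $Z = \sum_{j=0}^{k} N_j \lambda^j$. Using the trivial bound $\sum_{j<k} N_j \le 2^m$ (every matching is a subset of $E$) and factoring $\lambda^{k-1}$ out of the sub-maximal terms,
\[
Z \;\le\; N_k \lambda^k + 2^m \lambda^{k-1} \;=\; (N_k+1)\,\lambda^k,
\]
where the equality uses $\lambda = 2^m$. Hence $\pi(\mathcal{M}^\star) = N_k\lambda^k/Z \ge N_k/(N_k+1) \ge 1/2$, since $N_k \ge 1$ always.

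Finally, since $T_{mix} \le m\log(4en) \le 10\, m\log n$ for all $n \ge 2$, the output distribution $P_t$ of $\sigma_t$ at termination of Algorithm~\ref{RM} satisfies $d_{TV}(P_t,\pi) \le 1/(2e)$. Applying the standard variational inequality $P_t(A) \ge \pi(A) - d_{TV}(P_t,\pi)$ with $A = \mathcal{M}^\star$ yields
\[
\Pr\bigl[\,\sigma_t \in \mathcal{M}^\star\,\bigr] \;\ge\; \tfrac{1}{2} - \tfrac{1}{2e} \;=\; \tfrac{e-1}{2e},
\]
and a direct numerical comparison shows this exceeds $20/189$. The main obstacle is the first step: one must check that the exponent $m$ in the choice $\lambda = 2^m$ is precisely tuned to swamp the $2^m$ bound on the total number of matchings, which is what produces the key inequality $Z \le (N_k+1)\lambda^k$. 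The remaining steps are routine bookkeeping against the previously proved mixing bound and the definition of total variation distance.
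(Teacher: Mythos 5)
Your proposal is correct (conditional, like the paper's own proof, on the earlier mixing-time lemma, which the authors' remark acknowledges is flawed), and it follows the same two-step architecture as the paper: lower-bound the Gibbs mass of the set of maximum matchings, then subtract the total-variation error after $10m\log n \ge T_{mix}$ steps. Where you differ is in the first step. The paper sets $\lambda = S_m = \max_i S_i$ (the largest number of matchings of any fixed size), divides through by $\lambda^k$, and runs a geometric-series estimate to get $\mbox{Pr}_k({\cal G}) \ge \frac{10}{21}$, only remarking afterwards that the argument ``still goes through'' for the value $\lambda = 2^m$ actually used by Algorithm \ref{RM}. You instead work directly with $\lambda = 2^m$ and use the trivial count $\sum_{j<k} N_j \le 2^m$ (every matching is a subset of $E$) to get $Z \le (N_k+1)\lambda^k$, hence $\pi({\cal M}^\star) \ge \frac{1}{2}$; this is shorter, avoids the geometric-series bookkeeping and the $1+\Theta(1/S_m) \le \frac{11}{10}$ step, matches the algorithm's actual parameter choice without an after-the-fact remark, and yields the stronger constant $\frac{1}{2} - \frac{1}{2e}$ (the paper, using the weaker error term $\frac{1}{e}$, gets $\frac{10}{21} - \frac{1}{e} \ge \frac{20}{189}$), which of course still implies the stated bound. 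The only implicit step you share with the paper is that the total-variation distance to stationarity at time $10m\log n$ is at most the value guaranteed at $T_{mix}$, which holds by the standard monotonicity of $d_{TV}(P_t,\pi)$ in $t$.
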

\begin{proof}
Let ${\cal M}_i$ be the set all of matchings of size $i$ (and $S_i$ be its cardinality) in the given graph. Also, let
$k$ be the size of the maximum matching. We need
to find the probability that the random walk lands up on a maximum matching after $T = 10 m \log n \geq T_{mix}$ steps.
Let $X_T$ be the state after $T$ steps, and $\pi_T$ be the probability distribution after $T$ steps then by definition 
of mixing time and triangle inequality, 
\[
\displaystyle  \left|\sum_{\omega \in {\cal M}_k} \pi_T(\omega) -  \sum_{\omega \in {\cal M}_k} {\cal G}(\omega) \right| \leq \sum_{\omega \in {\cal M}_k} |\pi_T(\omega) - {\cal G}(\omega)| \leq   \sum_{\omega \in \Omega} |\pi_T(\omega) - {\cal G}(\omega)| \leq \frac{1}{e} 
\]
where $\sum_{\omega \in {\cal M}_k} \pi_T(\omega) := \mbox{Pr}_k(\pi_T)$ is the probability of reaching a maximum matching after $T$ steps (the success
probability of the algorithm)
and $\sum_{\omega \in {\cal M}_k} {\cal G}(\omega) := \mbox{Pr}_k({\cal G})$ is the probability of finding a maximum matching according to the
Gibbs distribution. We need to find the probability that $X_{T}$ is a maximum matching.  Thus we have,
\[
 \mbox{Pr}_k(\pi_T) \in \left[\mbox{Pr}_k({\cal G})-\frac{1}{e}, \mbox{Pr}_k({\cal G}) + \frac{1}{e} \right]
\]
Also,
\begin{eqnarray*}
\mbox{Pr}_k({\cal G}) &=&\frac{\lambda^kS_k}{\displaystyle\sum_{i=0}^k\lambda^i S_i} \\
&=&\frac{\lambda^kS_k}{\displaystyle\lambda^k S_k + \lambda^{k-1} S_{k-1} + \lambda^{k-2}
 S_{k-2} + \dots +  S_0 }\\
 \end{eqnarray*}
dividing by $\lambda^k$ both numerator and denominator\\
\begin{eqnarray*}
&=&\frac{S_k}{ S_k + \frac{\displaystyle S_{k-1}}{\lambda} + \frac{S_{k-2}}{\lambda^2} + ... +  \frac{S_0}{\lambda^k} }\\
\end{eqnarray*}
We put $\lambda=S_{m}$ where $S_m$=$\displaystyle \max_i^k S_i$\\
\begin{eqnarray*}
&=&\frac{S_k}{ \displaystyle S_k + \frac{ S_{k-1}}{S_m} + \frac{ S_{k-2}}{S_m^2} +
\dots +  \frac{S_0}{S_m^k} } = \frac{1}{\displaystyle1+\frac{S_{k-1}}{S_kS_m} + \frac{ S_{k-2}}{S_kS_m^2} +
\dots +\frac{S_0}{S_kS_m^k}}\\
\end{eqnarray*}
By definition of $S_m$, $\frac{S_i}{S_m}$ is always $\le 1$, hence we have\\
\begin{eqnarray*}
\frac{1}{\displaystyle1+\frac{S_{k-1}}{S_kS_m} + \frac{ S_{k-2}}{S_kS_m^2} +
\dots +\frac{S_0}{S_kS_m^k}} &\ge&\frac{1}{\displaystyle 1 + \frac{1}{S_k} + \frac{ 1}{S_kS_m} +\frac{1}{S_kS_m^2} ... +  \frac{1}{S_kS_m^{k-1}} }\\
&\ge&\frac{1}{\displaystyle 1 + \frac{1}{S_k} \left( 1+\frac{ 1}{S_m} +\frac{ 1}{S_m^2}
\dots +  \frac{1}{S_m^{k-1}}\right) }\\
 &\ge& \frac{1}{\displaystyle 1 +\frac{1}{S_k}\left( \frac{ 1-\left(\frac{1}{S_m}\right)^k}{1-\frac{1}{S_m}}\right) }\\
&\ge&\frac{1}{\displaystyle 1 + \frac{1}{S_k}\left(\frac{S_m^k-1}{S_m^{k-1}(S_m-1)}\right) }\\
\end{eqnarray*}
Since 
\begin{eqnarray*}
\displaystyle \left(\frac{S_m^k-1}{S_m^{k-1}(S_m-1)}\right)&=&\left(1 + \frac{S_m^{k-1} - 1}{S_m^{k-1}(S_m-1)}\right) =1 + \Theta\displaystyle\left(\frac{1}{S_m}\right) \leq \frac{11}{10}
\end{eqnarray*}
\begin{eqnarray*}
\mbox{Pr}_k({\cal G}) &\ge&\frac{1}{ 1 + \frac{1}{\displaystyle \frac{11 S_k}{10}}} \geq \frac{10}{21}\\
\end{eqnarray*}
This gives us,
\[
\mbox{Pr}_k(\pi_T) \geq \frac{10}{21} - \frac{1}{e} \geq \frac{20}{189}
\]
\end{proof}

Notice that our proof still goes through even if we choose a $\lambda$ that is 
larger than $S_m$ (we can take $\lambda = 2^m$). Therefore $\lambda$ can be represented using $m$ bits. 
We can now prove the main theorem,

\begin{thm}
Given a graph $G = (V,E)$ with $|V| = n$ and $|E| = m$, there exists a randomized algorithm that runs
in $O(m \log^2 n)$ time and finds a maximum matching with high probability.
\end{thm}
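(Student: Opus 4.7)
The plan is to boost the constant success probability guaranteed by Theorem \ref{prob} to a high-probability guarantee by standard independent repetition. First I would analyze the per-run cost of Algorithm \ref{RM}: the main loop executes $10m \log n$ iterations, and each iteration consists of (i) sampling a uniformly random edge from $E$, done in $O(1)$ time using an edge array, (ii) flipping a biased coin with bias $\tfrac{\lambda}{1+\lambda}$, and (iii) checking whether $\sigma \cup \{e_r\}$ is still a matching, which is $O(1)$ if each vertex maintains a flag indicating whether it is currently saturated, with constant-time updates when an edge is inserted or deleted. Hence each run of Algorithm \ref{RM} takes $O(m \log n)$ time.

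Next I would apply probability amplification. By Theorem \ref{prob}, each single execution of Algorithm \ref{RM} outputs a maximum matching with probability at least $\tfrac{20}{189}$. I would run the algorithm independently $C \log n$ times, for a sufficiently large constant $C$, and return the largest matching observed across all runs (note that this does not require knowing the value of the maximum matching, only comparing sizes). The probability that \emph{no} run produces a maximum matching is at most
\[
\left(1 - \tfrac{20}{189}\right)^{C \log n} \;\leq\; n^{-c}
\]
for any fixed constant $c > 0$, provided $C$ is chosen large enough. Hence the reported matching is a maximum matching with probability $1 - n^{-c}$, i.e.\ with high probability. The total running time is $O(\log n) \cdot O(m \log n) = O(m \log^2 n)$, giving the claimed bound.

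The main subtlety I expect to confront is implementing step (ii), the biased coin flip, in $O(1)$ time despite the fact that $\lambda = 2^m$ is exponential in $m$, so the bias $\tfrac{1}{1+\lambda}$ cannot be represented with a constant number of random bits. I would handle this by not simulating each coin flip individually: instead, since the ``discard'' outcome occurs with probability $\tfrac{1}{1+\lambda} \leq 2^{-m}$, I would sample the time to the next discard event as a geometric random variable with parameter $\tfrac{1}{1+\lambda}$, skipping ahead over the intervening ``insert'' trials in bulk. Across the $O(m \log n)$ iterations of a single run, the expected number of discard events is $O(m \log n / 2^m) = o(1)$, so the amortized per-iteration work for this step is $O(1)$ and the overall time bound is preserved. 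Combined with the amplification argument, this yields an $O(m \log^2 n)$-time randomized algorithm that finds a maximum matching with high probability, proving the theorem.
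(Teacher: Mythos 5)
Your proof follows essentially the same route as the paper's: a single run of Algorithm \ref{RM} costs $O(m\log n)$ time using constant-time vertex-occupancy checks, and $O(\log n)$ independent repetitions amplify the $\tfrac{20}{189}$ success probability from Theorem \ref{prob} to $1 - n^{-\Omega(1)}$. You additionally handle two details the paper leaves implicit — returning the largest matching found across the runs, and simulating the $\lambda = 2^m$-biased coin in amortized $O(1)$ time — but the core argument is identical (and, like the paper's, it inherits the upstream flaw acknowledged in the authors' remark that the coupling behind the mixing-time bound, and hence Theorem \ref{prob}, does not actually hold).
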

\begin{proof}
Each step of the algorithm runs in $O(1)$ time (we just need to maintain an array which indicates
whether the $i^{th}$ vertex is occupied in the matching or not), thus
one call of Algorithm \ref{RM} runs in $O(m \log n)$ time which by  Theorem \ref{prob} lands on a maximum matching 
with probability $\frac{20}{169}$. Thus calling it $10 \log n$ times independently, ensures that we land on  a 
maximum matching in one call is at least $1 - \left (\frac{169}{189} \right)^{10 \log n} = 1 - \frac{1}{n^{\Omega(1)}}$.
\end{proof}

\section{Lower Bound via Conductance}
The conductance method as defined in the Section is used to obtain lower bounds on the mixing time 
of the a Markov chain. To get a lower bound on $T_{relax}$ we need an upper-bound on $\phi$, and
by definition of $\phi$, for any cut $(S,\bar{S})$.

\[\phi \leq \displaystyle\left(\frac{\sum_{i \in S,j\in \bar{S}}\pi_iP_{ij}}{(\sum_{i\in S}\pi_i)(\sum_{i\in \bar{S}}\pi_i) } \right)\]

This allows us to observe: \\
\begin{lem}
For any graph $G$, the conductance of our Markov Chain satisfies \[ \phi \leq O\left(\frac{k}{m}\right) \] 
where $k$ is the size of maximum matching.
\end{lem}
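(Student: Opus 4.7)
The plan is to exhibit an explicit witnessing cut in the state graph. Fix any maximum matching $M_0$ and take $S = \{M_0\}$ as a singleton cut in $\Omega$. The key structural observation is that, precisely because $M_0$ is maximum, every edge $e \notin M_0$ must share an endpoint with some edge in $M_0$ (otherwise $M_0 \cup \{e\}$ would be a strictly larger matching). Consequently, proposing to add such an $e$ always yields an invalid configuration and the chain remains at $M_0$, so the only productive transitions out of $M_0$ are removals of its $k$ matched edges. Picking one such edge has probability $1/m$ and firing the removal branch has probability $1/(1+\lambda)$, hence
\[
\sum_{j \in \bar S} P(M_0, j) \;=\; \frac{k}{m(1+\lambda)}.
\]
Plugging this into the conductance formula, the factor of $\pi(M_0)$ cancels and I obtain
\[
\phi \;\leq\; \frac{k/\bigl(m(1+\lambda)\bigr)}{1-\pi(M_0)}.
\]

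It remains to show that $(1+\lambda)\bigl(1-\pi(M_0)\bigr) = \Omega(1)$ in every regime, which I would do by case analysis on $\pi(M_0)$. If $\pi(M_0) \leq 1/2$ then $1-\pi(M_0) \geq 1/2$ and the bound $\phi \leq 2k/m$ drops out immediately. Otherwise $\pi(M_0) = \lambda^k/Z > 1/2$, and combined with the trivial bound $Z \geq S_k \lambda^k$ this forces $S_k = 1$, so $M_0$ is in fact the unique maximum matching. In this subcase I would invoke the elementary observation that $S_{k-1} \geq k$ (the matchings $M_0 \setminus \{e\}$ for $e \in M_0$ are $k$ distinct matchings of size $k-1$) together with $Z \leq 2\lambda^k$ to get
\[
1 - \pi(M_0) \;=\; \frac{Z - \lambda^k}{Z} \;\geq\; \frac{S_{k-1}\lambda^{k-1}}{2\lambda^k} \;\geq\; \frac{k}{2\lambda},
\]
so that $(1+\lambda)\bigl(1-\pi(M_0)\bigr) \geq k/2$ and hence $\phi \leq 2/m \leq 2k/m$. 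In both cases the cut witnesses $\phi \leq 2k/m = O(k/m)$.

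I do not anticipate a serious technical obstacle beyond identifying the right cut: the heart of the argument is the observation that a maximum matching has at most $k$ outgoing transitions, each of probability $1/\bigl(m(1+\lambda)\bigr)$. The only subtlety is the regime where $\pi(M_0)$ is extremely close to $1$, which can only occur when there is a unique maximum matching and $\lambda$ is huge; handling that regime is precisely where the inequality $S_{k-1} \geq k$ is needed to prevent the denominator $\pi(\bar S) = 1-\pi(M_0)$ from decaying faster than $1/\lambda$ and thereby ruining the product with $(1+\lambda)$.
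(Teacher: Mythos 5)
Your proof is correct and follows essentially the same route as the paper: the same singleton cut $S=\{M_0\}$ at a maximum matching, the same computation that the only outgoing transitions are the $k$ removals each of probability $\frac{1}{m(1+\lambda)}$, and then a lower bound on $1-\pi(M_0)$. The only (minor) difference is in that last step, where the paper lower-bounds $1-\lambda^k/Z$ by $\Theta(1/\lambda)$ via the geometric series $Z\geq\sum_{i=0}^{k}\lambda^{i}$ (using $S_i\geq 1$), while you use a slightly more careful case split on $\pi(M_0)$ together with $S_{k-1}\geq k$; both yield $\phi\leq O(k/m)$.
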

\begin{proof}
To give an upper bound on conductance we need to construct a cut $(S, \bar{S})$ for which we can estimate 
the above quantity. Let $S$ be consisting of exactly one matching which is the maximum matching $m_k$
where $k$ is the size of the maximum matching. Thus the number of edges going out of $S$ is $k$.
\[
\displaystyle \frac{\sum_{i \in S,j\in \bar{S}}\pi_iP_{ij}}{(\sum_{i\in S}\pi_i)(\sum_{i\in \bar{S}}\pi_i) } = \frac{\frac{\lambda^k}{Z}\cdot k \cdot \frac{1}{m(\lambda +1)}}{\frac{\lambda^k}{Z} \left(1- \frac{\lambda^k}{Z}\right)} = \frac{k}{(1 + \lambda)m} \cdot \frac{1}{ \left(1- \frac{\lambda^k}{Z}\right)}
\]
Also using the terminology of Theorem \ref{prob}
\begin{eqnarray*}
\displaystyle \frac{\lambda^k}{Z} = \frac{\lambda^k}{\sum_{i=0}^k S_i \lambda^i} = \frac{1}{\sum_{i=0}^k S_{k-i} \frac{1}{\lambda^i}} \leq \frac{1}{\sum_{i=0}^k \frac{1}{\lambda^i}}
= \frac{\lambda^{k}(\lambda-1)}{\lambda^{k+1}-1} = 1 - \Theta\left(\frac{1}{\lambda}\right)
\end{eqnarray*}
Therefore, 
\[\displaystyle \frac{\sum_{i \in S,j\in \bar{S}}\pi_iP_{ij}}{(\sum_{i\in S}\pi_i)(\sum_{i\in \bar{S}}\pi_i) } = \frac{k}{(1 + \lambda)m} \cdot \frac{1}{ \Theta\left(\frac{1}{\lambda}\right)}
= \Theta\left(\frac{k}{m}\right)
\]
Thus the result follows.
\end{proof}
As a result of the previous lemma we have the following result.

\begin{lem}
For any graph $G$, the mixing time of our Markov chain satisfies, $T_{mix} \geq \Omega\left( \frac{m}{k} \right)$
\end{lem}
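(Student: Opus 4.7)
The plan is to chain together the conductance upper bound from the preceding lemma with the two standard inequalities relating conductance, relaxation time, and mixing time that were recorded in Section \ref{d}. Concretely, I would invoke in order: (i) the bound $\phi \le O(k/m)$ just proved, (ii) the inequality $T_{relax} = \Omega(1/\phi)$ stated in the preliminaries (citing \cite{RSVVY10}), and (iii) the inequality $T_{relax} + 1 \le T_{mix}$ stated in the preliminaries (citing \cite{B09}). Stringing these together immediately gives
\[
T_{mix} \;\ge\; T_{relax} + 1 \;\ge\; \Omega\!\left(\frac{1}{\phi}\right) \;\ge\; \Omega\!\left(\frac{m}{k}\right),
\]
which is exactly the claimed lower bound.

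There is essentially no genuine obstacle here; the work was done in the previous lemma where a concrete cut $S = \{m_k\}$ was exhibited to upper-bound $\phi$. The only thing worth being careful about is that the $\Omega(\cdot)$ in $T_{relax} = \Omega(1/\phi)$ refers to a universal constant, so the chained $\Omega(m/k)$ absorbs the constants coming from both the conductance bound and the spectral-gap/relaxation-time inequality. I would state explicitly that both inequalities apply because our chain is ergodic and reversible (established in the lemma of Section \ref{e}), which is the hypothesis under which these standard conductance lower bounds on $T_{relax}$ hold.

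Thus the proof in the paper reduces to a one-line calculation: apply the previous lemma, substitute into $T_{relax} = \Omega(1/\phi)$, and then use $T_{mix} \ge T_{relax} + 1$. No new argument, metric, or coupling construction is needed at this stage — the lemma is essentially a corollary packaging the conductance bound in the more familiar form of a mixing-time lower bound, which is what makes it a useful companion to the $O(m \log n)$ upper bound proved in Section \ref{g} and shows the tightness (up to logarithmic factors) of the analysis when $k$ is small.
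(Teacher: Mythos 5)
Your proposal is correct and matches the paper's argument: the paper's proof is literally the statement that the lemma ``follows from results in Section \ref{d}'', i.e.\ exactly the chaining of $\phi \le O(k/m)$ with $T_{relax} = \Omega(1/\phi)$ and $T_{relax}+1 \le T_{mix}$ that you spell out. Your added remark about ergodicity and reversibility being the hypotheses for these standard inequalities is a harmless (and slightly more careful) elaboration of the same route.
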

\begin{proof}
Follows from results in Section \ref{d}.
\end{proof}

Thus the lower bound is sharp (upto logarithmic factors) if the size of the matching in the 
given graph is small (say at most $O(\mbox{poly}\log n)$). \\
 
\textbf{A Note on Other methods to prove Lower Bound}: There are other methods,
apart from conductance, which can be used to prove lower bounds on mixing time. Although
powerful and useful in many contexts, it is not clear whether such methods could be applied
to our chain. For eg. the Wilson's method \cite{B09} expects the knowledge of one eigenvector 
of the transition vector that is different from the all 1's vector and the corresponding eigenvalue
lies in the range $\left(0,\frac{1}{2}\right)$.  Since the matrix $P$ for our case is an exponential
sized matrix with apparently no useful pattern in the entries, it is not clear how to come up 
with such an eigenvector. In fact, we made several educated guesses for coming up with
such a vector all of which failed to serve our purpose.


\section{Conclusion} \label{k}
In this paper, we gave a new randomized algorithm for finding maximum matchings in general 
bipartite graphs that runs in $O(m \log^2 n)$ time that improves upon the running time of Micali and Vazirani.
Our algorithm was based on the MCMC paradigm which performs a truncated random walk on the Markov
Chain defined by the Glauber Dynamics with parameter $\lambda$. Apart from the benefit of being 
very simple (both in analysis and implementation) our algorithm is the first near linear time complexity algorithm for the
maximum matching problem for general graphs. Moreover, unlike \cite{GKK10} the running time of our algorithm is not a random variable. \\  

To our knowledge this is for the first time Glauber dynamics and the nature of Gibbs distribution has been 
exploited to design an faster algorithm for a problem for which efficient solutions are already known, 
and we hope this idea can be of use in other problems as well. 
The obvious open problem will be to improve both the upper-bounds and lower bounds on the mixing time.
Is it possible to improve upon the present bound to get an $O(m)$ time algorithm?  Also 
one can explore the possibility of proving a tighter lower bound of $\Omega(m)$ by using 
more refined techniques. More specifically, it would be interesting to see if one can obtain an 
explicit eigenvector of the transition matrix $P$ (that is different from all 1's vector) and apply Wilson's 
to get an improved lower bound on the mixing time.

\section{Acknowledgements}
Manjish Pal would like to thank Eric Vigoda for answering his mails about the 
Gibbs distribution. Thanks to Purushottam Kar for his feedback on the presentation of the paper.


\begin{thebibliography}{99}

 



\bibitem{B09} N. Berestycki,
 {\it Eight lectures on Mixing Times}, 2009.
 
 \bibitem{BD97} R. Bubley and M.E. Dyer,
 {\it Path coupling, Dobrushin uniqueness, and approximate counting},
 {\sf 38th Annual Symposium on Foundations of Computer Science, IEEE}, 1997.

\bibitem{CH82} R. Cole and J.E. Hopcroft,
 {\it On edge coloring bipartite graphs},
 {\sf SIAM J. Comput., 11(3)}, 1982, pp. 540-546.

\bibitem{COS01} R. Cole, K. Ost, and S. Schirra,
 {\it Edge-coloring bipartite multigraphs in $O(E \log D)$ time},
 {\sf Combinatorica, 21(1)}, 2001, pp. 5-12.

\bibitem{DFJ01} M. Dyer, A. Frieze and M. Jerrum,
{\it On Counting Independent Sets In Sparse Graphs},
{\sf SIAM Journal on Computing 31(5)}, 2001, pp. 1527-1541.

\bibitem{DFK91} M. Dyer, A. Frieze and R. Kannan,
{\it 	A random polynomial time algorithm for approximating the volume of convex sets},
{\sf Journal of the Association for Computing Machinery, 38},1991, pp. 1-17.

\bibitem{DG00}  M. Dyer and C. Greenhill,
{\it On Markov chains for independent sets}, {\sf Journal of Algorithms 35(1)} 2000, pp. 17-49.

\bibitem{DGJ06} M. Dyer, L. A. Goldberg and M. Jerrum,
 {\it Markov chain comparison},
 {\sf Probability Surveys Vol. 3}, 2006, pp. 89–111.

\bibitem{E65} J. Edmonds, {\it Paths, Trees and Flowers}, {\sf Canadian J.} 1965, pp. 449-467

\bibitem{G73} H. Gabow, {\it Implementation of Algorithms for Maximum Matching of Graphs}, PhD Thesis, 1973

\bibitem{GGSVY11} A. Galanis, Q. Ge, D. Stefankovic, E. Vigoda and L. Yang.,
{\it Improved Inapproximability Results for Counting Independent Sets in the Hard-Core Model},
{\sf arXiv:1105.5131v1 }, 2011.

\bibitem{GK82} H.N. Gabow and O. Kariv,
 {\it Algorithms for edge coloring bipartite graphs and multigraphs},
 {\sf SIAM J. Comput., 11(1)}, 1982, pp. 117–129.
 
\bibitem{GKK09} A. Goel, M. Kapralov, and S. Khanna,
{\it 	Perfect matchings via uniform sampling in regular bipartite graphs},
{\sf Proceedings of the fifth annual ACM-SIAM symposium on Discrete algorithms}, 2009, pp. 11-17.

\bibitem{GKK09'} A. Goel, M. Kapralov, and S. Khanna,
{\it 	Perfect matchings in $O (n^{1.5})$ time in regular bipartite graphs.},
{\sf arXiv:0902.1617v2}, 2009.

\bibitem{GKK10} A. Goel, M. Kapralov, and S. Khanna,
{\it 	Perfect Matchings in $O(n \log n)$ Time in Regular Bipartite Graphs},
{\sf STOC}, 2010.

\bibitem{HK73} 	J.E. Hopcroft and R.M. Karp,
{\it 	An $n^\frac{5}{2} $ algorithm for maximum matchings in bipartite graphs},
{\sf SIAM J.Comput., 2(4)},1973, pp. 225-231.

\bibitem{JS96} 	Mark Jerrum and Alistair Sinclair,
{\it 	The Markov chain Monte Carlo method: an approach to approximate counting and integration },
{\sf "Approximation Algorithms for NP-hard Problems," D.S.Hochbaum ed., PWS Publishing, Boston}, 1996.

\bibitem{JS89} M.R. Jerrum and A.J. Sinclair, {\it Approximating the permanent}, {\sf SIAM Journal on 
Computing}, 1989 vol. 18, pp.1149Ð1178.

\bibitem{K16} D. K$\ddot{o}$nig,
{\it 	Uber graphen und ihre anwendung auf determinententheorie und mengenlehre},
{\sf Math. Annalen, 77},1916, pp. 453-465.

\bibitem{KM74} T. Kameda and I. Munro, {\it A $O(|V| \cdot |E|)$ Algorithm for Maximum Matchings
of Graphs}, {\sf Computing}, vol.12, pp. 91-98, 1974.

\bibitem{KT09} R. Kleinberg and Eva Tardos, {\it Algorithm Design}, {\sf Addison Wesley}, 2005.

\bibitem{L76}   E.G. Lawler, {\it Combinatorial Optimization Theory}, {\sf Holt, Rinehart and Winston}, Chapter 6, 
pp. 217-239, 1976.

\bibitem{LPW} D. A. Levin, Y. Peres, E. Wilmer,
{\it Markov Chains and Mixing Times}, {\sf American Mathematical Society}, 2009.

\bibitem{MV80} 	S. Micali and V. V. Vazirani,
{\it An $O(|E|\sqrt{|V|})$ time Algorithm for Maximum Matchings},
{\sf FOCS}, 1980, pp.17-27.

\bibitem{MR95}  R. Motwani and P. Raghavan,
{\it 	Randomized Algorithms},
{\sf Cambridge University Press}, 1995.

\bibitem{MWW09} E. Mossel, D. Weitz and N. Wormald,
{\it On the hardness of sampling independent sets beyond the tree threshold},
{\sf Probability Theory and Related Fields}, 2009 vol. 143, no. 3-4, pp. 401-439.

\bibitem{MSW04} F. Martinelli, A. Sinclair and D. Weitz, {\it Fast mixing for independent sets, colorings and other models 
on trees}, {\sf SODA}, 2004, pp. 456-465


\bibitem{RSVVY10} R. Restrepo, D. Stefankovic, J.C. Vera, E. Vigoda and L. Yang,
{\it Phase Transition for Glauber Dynamics for Independent Sets on Regular Trees},
{\sf arXiv:1007.2255v1}, 2010.

\bibitem{S10} A. Sly,
{\it Computational Transition at the Uniqueness Threshold}, {\sf FOCS}, 2010, pp. 287-296. 


\bibitem{W06} D. Weitz,
 {\it Counting independent sets up to the tree threshold},
 {\sf STOC}, 2006, pp. 140-149.

\bibitem{V99} E. J.Vigoda,
{\it Sampling from Gibbs distributions}, 1999, Phd Thesis.
	
\bibitem{V01} S. Vadhan,
{\it 	The Complexity of Counting in Sparse, Regular, and Planar Graphs},
{\sf SIAM Journal on Computing 31(2)},2001, pp. 398-427.

\end{thebibliography}
\end{document}